\newtheorem{theorem}{Theorem}%[section]
\newtheorem{lemma}{Lemma}
\newtheorem{definition}{Definition}
\newtheorem{remark}{Remark}
\newtheorem{assumption}{Assumption}
\newcommand{\sr}{\stackrel}
\newcommand{\tri}{\sr{\triangle}{=}}
\newcommand{\noi}{\noindent}
\newcommand{\be}{\begin{equation}}
\newcommand{\ee}{\end{equation}}
\newcommand{\bea}{\begin{eqnarray}}
\newcommand{\eea}{\end{eqnarray}}
\newcommand{\bes}{\begin{eqnarray*}}
\newcommand{\ees}{\end{eqnarray*}}
\newcommand{\bfi}{\begin{figure}}
\newcommand{\bfit}{\begin{figure}[t]}
\newcommand{\bfib}{\begin{figure}[b]}
\newcommand{\bfih}{\begin{figure}[h]}
\newcommand{\bfip}{\begin{figure}[p]}
\newcommand{\efi}{\end{figure}}
\newcommand{\bi}{\begin{itemize}}
\newcommand{\ei}{\end{itemize}}
\newcommand{\ben}{\begin{enumerate}}
\newcommand{\een}{\end{enumerate}}
\begin{document}

\sloppy

%% Paper Title
%% You can use linebreaks \\ within to get better formatting as
%% desired. 
\title{Variational Equalities of Directed Information and Applications}

\author{
  \IEEEauthorblockN{Photios A. Stavrou and Charalambos D. Charalambous}
  \IEEEauthorblockA{Dep. of Electr. \& Comp. Eng., UCY, Nicosia, Cyprus\\
    {\it Email: \{stavrou.fotios,chadcha\}@ucy.ac.cy} }}

%% Create the title:
\maketitle

\begin{abstract}
In this paper we introduce two variational equalities of directed information, which are analogous to those of mutual information employed in the Blahut-Arimoto Algorithm (BAA). Subsequently, we introduce nonanticipative Rate Distortion Function (RDF) ${R}^{na}_{0,n}(D)$ defined via directed information introduced in \cite{charalambous-stavrou-ahmed2012a}, and we establish its equivalence to Gorbunov-Pinsker's nonanticipatory $\epsilon$-entropy $R^{\varepsilon}_{0,n}(D)$. By invoking certain results we first establish existence of the infimizing reproduction distribution for ${R}^{na}_{0,n}(D)$, and then we give its implicit form for the stationary case. Finally, we utilize one of the variational equalities and the closed form expression of the optimal reproduction distribution to provide an algorithm for the computation of ${R}^{na}_{0,n}(D)$.   
\end{abstract}

\section{Introduction}
\par Directed information from a sequence of Random Variables (RV's) $X^n\tri\{X_0,X_1,\ldots,X_n\}\in{\cal X}_{0,n}\tri\times_{i=0}^n{\cal X}_i$, to another synchronized sequence $Y^n\tri\{Y_0,Y_1,\ldots,Y_n\}\in{\cal Y}_{0,n}\tri\times_{i=0}^n{\cal Y}_i$ is, in general,  a  functional of two collections of nonanticipative or causal conditional distributions $\{P_{X_i|X^{i-1},Y^{i-1}}(\cdot|\cdot,\cdot),~P_{Y_i|Y^{i-1},X^i}(\cdot|\cdot,\cdot)~:~i=0,1,\ldots,n\}$, unlike mutual information which is a function of $P_{X^n}$ and $P_{Y^n|X^n}$.
In the past, directed information or its variants were used to characterize capacity of channels with memory and feedback  \cite{chen-berger2005,tatikonda-mitter2009}, lossy data compression of sequential codes \cite{tatikonda2000}, lossy data compression of block codes \cite{ma-ishwar2011}, and capacity of networks \cite{kramer1998}. 

In this paper, we adopt the mathematical formulation introduced in \cite{charalambous-stavrou2012}, to define directed information via relative entropy with respect to two consistent families of conditional distributions defined on abstract spaces, and we derive two variational equalities,  which are analogous to those of mutual information utilized in Blahut-Arimoto algorithm (BAA).

Subsequently, we introduce nonanticipative RDF, ${R}^{na}_{0,n}(D)$, introduced in \cite{charalambous-stavrou-ahmed2012a} to derive realizable filters,  and we show its relation to Gorbunov-Pinsker's nonanticipatory $\epsilon$-entropy, $R^{\varepsilon}_{0,n}(D)$. We then proceed by giving general conditions for existence of a minimizing nonanticipative reproduction distribution for ${R}^{na}_{0,n}(D)$, and we derive its implicit  form for the stationary case. Finally, we invoke one of the variational equalities and the implicit form of the optimal reproduction distribution  to present an algorithm for computing ${R}^{na}_{0,n}(D)$ similar to the BAA.  

Recently, BAA's are presented in \cite{naiss-permuter2013b} for lossy compression with feedforward at the decoder, and in \cite{naiss-permuter2013a} for feedback channel capacity without using the variational equalities derived in this paper. The fundamental difference between $R^{na}_{0,n}(D)$ and feedforward information RDF is that the former is nonanticipative while the latter need not to be nonanticipative.

Our interest in nonanticipative RDF, ${R}^{na}_{0,n}({D})$, is motivated by applications in which the processing of information  is done via symbol-by-symbol transmission (zero delay). Some applications are listed below.  \\
{\bf(1)} Source-channel matching via symbol-by-symbol transmission. A necessary condition for such matching is realizability of  the optimal reproduction distribution of RDF via an encoder-channel-decoder which are nonanticipative maps (operate causally) \cite{gastpar2003,kourtellaris-charalambous-stavrou2013a,charalambous-stavrou-ahmed2012a}. Therefore, for sources with memory the nonanticipative RDF is the appropriate information measure of lossy compression for source-channel matching via symbol-by-symbol transmission.\\
{\bf(2)} Computation of  the optimal performance theoretically attainable (OPTA) by sequential quantizers \cite{tatikonda2000}, see also \cite{ma-ishwar2011} for video coding applications.\\
{\bf(3)} Computation of upper bounds on the OPTA by non-causal codes. This follows from the equivalence between $R^{na}_{0,n}(D)$ and $R^{\varepsilon}_{0,n}(D)$ established in this paper.\\
{\bf (4)} Constructing realizable filters based on nonanticipative RDF (see \cite{charalambous-stavrou-ahmed2012a,stavrou-charalambous2013}).

The paper is structured as follows. In Section~\ref{causal_channels} we construct the two equivalent definitions of nonanticipative channels on abstract spaces, and we define directed information via the information divergence. In  Section~\ref{variational} we  derive the variational equalities of directed information. Finally, in  Section~\ref{applications} we give the connection between $R^{na}_{0,n}(D)$ and $R^{\varepsilon}_{0,n}(D)$, we give the implicit form of the optimal nonanticipative reproduction distribution for the stationary case, and we discuss an application of the variational equality to nonanticipative RDF. Lengthy proofs are omitted and references are given where they can be found.
\section{Nonancticipative Channels and Directed Information}\label{causal_channels}
\par In this section we define directed information using relative entropy, as a functional of two consistent families of conditional distributions that uniquely define $\{P_{X_i|X^{i-1},Y^{i-1}}(\cdot|\cdot,\cdot):i=0,1,\ldots,n\}$ and $\{P_{Y_i|Y^{i-1},X^i}(\cdot|\cdot,\cdot):i=0,1,\ldots,n\}$, respectively, and vice versa following \cite{charalambous-stavrou2012}. Throughout the paper we assume ${\cal X}_n$, ${\cal Y}_n$, $n=0,1,\ldots$, are Polish spaces.\\
{\bf Notation}. Let $\mathbb{N} \tri \{0,1,2,\ldots\},$ and $\mathbb{N}^n \tri \{0,1,2,\ldots,n\}.$ Introduce two sequence of measurable spaces $\{({\cal X}_n,{\cal B}({\cal X }_n)):n\in\mathbb{N}\}$ and $\{({\cal Y}_n,{\cal B}({\cal Y}_n)):n\in\mathbb{N}\},$ where ${\cal B}({\cal X}_n)$ and ${\cal B}({\cal Y}_n)$ are Borel $\sigma-$algebras of subsets of ${\cal X}_n$ and ${\cal Y}_n$, respectively. Points in ${\cal X}^{\mathbb{N}}\tri{{\times}_{n\in\mathbb{N}}}{\cal X}_n,$ ${\cal Y}^{\mathbb{N}}\tri{\times_{n\in\mathbb{N}}}{\cal Y}_n$ are denoted by ${\bf x}\tri\{x_0,x_1,\ldots\}\in{\cal X}^{\mathbb{N}},$ ${\bf y}\tri\{y_0,y_1,\ldots\}\in{\cal Y}^{\mathbb{N}}$, and their restrictions to finite coordinates by $x^n\tri\{x_0,x_1,\ldots,x_n\}\in{\cal X}_{0,n},$ $y^n\tri\{y_0,y_1,\ldots,y_n\}\in{\cal Y}_{0,n},$ for $n\in\mathbb{N}$.
Let ${\cal B}({\cal X}^{\mathbb{N}})\tri\odot_{i\in\mathbb{N}}{\cal B}({\cal X}_i)$, ${\cal B}({\cal Y}^{\mathbb{N}})\tri\odot_{i\in\mathbb{N}}{\cal B}({\cal Y}_i)$ denote the $\sigma-$algebras on ${\cal X}^{\mathbb{N}}$, ${\cal Y}^{\mathbb{N}}$, respectively, generated by cylinder sets. Hence, ${\cal B}({\cal X}_{0,n})$ and ${\cal B}({\cal Y}_{0,n})$ denote the $\sigma-$algebras of cylinder sets in ${\cal X}^{\mathbb{N}}$ and ${\cal Y}^{\mathbb{N}},$ respectively, with bases over $A_i\in{\cal B}({\cal X}_i)$,  $B_i\in{\cal B}({\cal Y}_i), i=0,1,\ldots,n$, respectively.
The set of stochastic kernels on ${\cal Y}$ given ${\cal X}$ is denoted by ${\cal Q}({\cal Y};{\cal X})$.\\
\noi{\bf Feedback Channel.} Suppose for each $n\in\mathbb{N},$ the distributions $\{p_n(dx_n;x^{n-1},y^{n-1}):n\in\mathbb{N}\}$ with $p_0(dx_0;x^{-1},y^{-1})\tri{p}_0(x_0)$ satisfy the following conditions.\\
{\bf i)} For $n\in\mathbb{N},$ $p_n(\cdot;x^{n-1},y^{n-1})$ is a probability measure on ${\cal B}({\cal X }_n);$\\
{\bf ii)} For every $A_n\in{\cal B}({\cal X}_n),~n\in\mathbb{N},$ $p_n(A_n;x^{n-1},y^{n-1})$ is a $\odot^{n-1}_{i=0}\big({\cal B}({\cal X }_i)\odot{\cal B}({\cal Y}_i)\big)$-measurable function of $x^{n-1}\in{\cal X}_{0,n-1},$ $y^{n-1}\in{\cal Y}_{0,n-1}$.\\
Let $C\in{\cal B}({\cal X}_{0,n})$ be a cylinder set of the form $
C\tri\big\{{\bf x}\in{\cal X}^{\mathbb{N}}:x_0\in{C_0},x_1\in{C_1},\ldots,x_n\in{C_n}\big\},~C_i\in{\cal B}({\cal X }_i),~i\in\mathbb{N}^n$.
Define a family of measures ${\bf P}(\cdot|{\bf y})$ on ${\cal B}({\cal X}^{\mathbb{N}})$ by
\begin{align}
{\bf P}(C|{\bf y})&\tri\int_{C_0}p_0(dx_0)\ldots\int_{C_n}p_n(dx_n;x^{n-1},y^{n-1})\label{equation2}\\
&\equiv{\overleftarrow{P}}_{0,n}(C_{0,n}|y^{n-1}),~C_{0,n}=\times_{i=0}^n{C_i}.\label{equation4a}
\end{align}
The notation ${\overleftarrow{P}}_{0,n}(\cdot|y^{n-1})$ denotes the restriction of the measure ${\bf P}(\cdot|{\bf y})$ on cylinder sets $C\in{\cal B}({\cal X}_{0,n})$, for $n\in\mathbb{N}$.\\
Thus, if conditions {\bf i)} and {\bf ii)} hold then for each ${\bf y}\in{\cal Y}^{\mathbb{N}},$ the right hand side of (\ref{equation2}) defines a consistent family of finite-dimensional distribution on $({\cal X}^{\mathbb{N}},{\cal B}({\cal X}^{\mathbb{N}}))$, and hence there exists a unique measure on $({\cal X}^{\mathbb{N}},{\cal B}({\cal X}^{\mathbb{N}})),$ from which $p_n(dx_n;x^{n-1},y^{n-1})$ is obtained. This is the usual definition of a feedback channel (its input distribution), as a family of functions $p_n(dx_n;x^{n-1},y^{n-1})$ satisfying conditions {\bf i)} and {\bf ii)}.\\
An alternative, equivalent definition of a feedback channel is established as follows. Consider a family of measures ${\bf P}(\cdot|{\bf y})$ on $({\cal X}^{\mathbb{N}},{\cal B}({\cal X}^{\mathbb{N}}))$ satisfying the following consistency condition.\\
{\bf C1}:~~If $E\in{\cal B}({\cal X}_{0,n})$, then ${\bf P}(E|{\bf y})$ is ${\cal B}({\cal Y}_{0,n-1})-$measurable function of ${\bf y}\in{\cal Y}^{\mathbb{N}}$.\\
The set of such measures is denoted by ${\cal Q}^{\bf C1}({\cal X}^{\mathbb{N}};{\cal Y}^{\mathbb{N}})$. For Polish spaces, it can be shown that for any family of measures ${\bf P}(\cdot|{\bf y})$ satisfying {\bf C1} one can construct a collection of conditional distributions  $\{p_n(dx_n;x^{n-1},y^{n-1}):n\in\mathbb{N}\}$ satisfying conditions {\bf i)} and {\bf ii)} which are connected with ${\bf P}(\cdot|{\bf y})$ via relation (\ref{equation2}).\\
\noi{\bf Feedforward Channel.} The previous methodology can be repeated for the collection of distributions $\{q_n(dy_n;y^{n-1},x^n):n\in\mathbb{N}\}$ which satisfy similar conditions to {\bf i)} and {\bf ii)}.
Similarly as before, define a family of measures ${\bf Q}(\cdot|{\bf x})$ on ${\cal B}({\cal Y}^{\mathbb{N}})$ by
\begin{align}
{\bf Q}(D|{\bf x})
&\tri\int_{D_0}q_0(dy_0;x_0)\ldots\int_{D_n}q_n(dy_n;y^{n-1},x^n)\label{equation4}\\
&\equiv{\overrightarrow{Q}}_{0,n}(D_{0,n}|x^n),~D_{0,n}\in{\cal B}({\cal Y}_{0,n}).\label{equation4b}
\end{align}
\noi Then, (\ref{equation4b}) is a unique measure on $({\cal Y}^{\mathbb{N}},{\cal B}({\cal Y}^{\mathbb{N}}))$ from which $\{q_n(dy_n;y^{n-1},x^n):n\in\mathbb{N}\}$ is obtained.\\
An equivalent definition of a feedforward channel is a family of measures ${\bf Q}(D|{\bf x})$ satisfying the following consistency condition.\\
{\bf C2}: If $F\in{\cal B}({\cal Y}_{0,n}),$ then ${\bf Q}(F|{\bf x})$ is ${\cal B}({\cal X}_{0,n})-$measurable function of ${\bf x}\in{\cal X}^{\mathbb{N}}.$\\
The set of such measures is denoted by ${\cal Q}^{\bf C2}({\cal Y}^{\mathbb{N}};{\cal X}^{\mathbb{N}})$. Then, for any family of measures ${\bf Q}(\cdot|{\bf x})$ on $({\cal Y}^{\mathbb{N}},{\cal B}({\cal Y}^{\mathbb{N}}))$ satisfying {\bf C2} one can construct a collection of conditional distributions $\{q_n(dy_n;y^{n-1},x^n):n\in\mathbb{N}\}$ which are connected with ${\bf Q}(\cdot|{\bf x})$ via relation (\ref{equation4}).

%%%%%%%%%%%%%%%%%%%%%%%%%%%%%%%%%%%%%%%%%%%%%%%%%%%%%%%%%%%%%%%%%%%%%%%%%%%%%%%%%%%%%%%%%%%%

\subsection{Directed Information Functional}\label{directed_information}
\par Next, we  define directed information $I(X^n\rightarrow{Y^n})$ using  ${\bf P}(\cdot|{\bf y})$  and ${\bf Q}(\cdot|{\bf x})$. Given ${\bf P}(\cdot|\cdot)\in{\cal Q}^{\bf C1}({\cal X}^{\mathbb{N}};{\cal Y}^{\mathbb{N}})$ and ${\bf Q}(\cdot|\cdot)\in{\cal Q}^{\bf C2}({\cal Y}^{\mathbb{N}};{\cal X}^{\mathbb{N}})$ define:\\
{\bf P1}: The joint distribution on ${\cal X}^{\mathbb{N}}\times{\cal Y}^{\mathbb{N}}$ defined uniquely by
\begin{align}
({\overleftarrow P}_{0,n}\otimes{\overrightarrow Q}_{0,n})(\times^n_{i=0}A_i{\times}B_i),A_i\in{\cal B}({\cal X}_i),~B_i\in{\cal B}({\cal Y}_i). \nonumber 
\end{align}
{\bf P2}: The marginal distributions on ${\cal X}^{\mathbb{N}}$ defined uniquely for $A_i\in{\cal B}({\cal X}_i)$, $i=0,1,\ldots,n$, by
\begin{align}
\mu_{0,n}(\times^n_{i=0}A_i)%\tri\mathbb{P}\{X_0\in{A}_0, Y_0\in{\cal Y}_0,\ldots, X_n\in{A}_n, Y_n\in{\cal Y}_n\},\nonumber\\
=({\overleftarrow P}_{0,n}\otimes{\overrightarrow Q}_{0,n})(\times^n_{i=0}(A_i\times{\cal Y}_i)).\nonumber%\label{equation19}
\end{align}
{\bf P3}: The marginal distributions on ${\cal Y}^{\mathbb{N}}$ defined uniquely for $B_i\in{\cal B}({\cal Y}_i),~i=0,1,\ldots,n$, by
\begin{align}
\nu_{0,n}(\times^n_{i=0}B_i)=({\overleftarrow P}_{0,n}\otimes{\overrightarrow Q}_{0,n})(\times^n_{i=0}({\cal X}_i\times{B}_i)).\nonumber%\label{equation20}
\end{align}
{\bf P4}: The measure ${\overrightarrow\Pi}_{0,n}:{\cal B}({\cal X}_{0,n})\odot{\cal B}({\cal Y}_{0,n})\mapsto[0,1]$ defined uniquely for $A_i\in{\cal B}({\cal X}_i)$,~$B_i\in{\cal B}({\cal Y}_i),~i=0,1,\ldots,n$, by
\begin{align}
{\overrightarrow\Pi}_{0,n}(\times^n_{i=0}(A_i{\times}B_i))&\tri({\overleftarrow P}_{0,n}\otimes\nu_{0,n})(\times^n_{i=0}(A_i{\times}B_i)).\nonumber%\label{equation21}
\end{align}
\noi By invoking the definition of directed information  and measures {\bf P1}-{\bf P4}, it can be shown by repeated application of chain rule of relative entropy \cite{dupuis-ellis97} that\footnote{Unless stated otherwise, integrals with respect to measures are over the spaces on which these are defined.}
\begin{align}
&I(X^n\rightarrow{Y}^n)=\mathbb{D}({\overleftarrow P}_{0,n} \otimes {\overrightarrow Q}_{0,n}||{\overrightarrow\Pi}_{0,n})\label{equation33}\\
&=\int \log \Big( \frac{{\overrightarrow Q}_{0,n}(d y^n|x^n)}{\nu_{0,n}(dy^n)}\Big)({\overleftarrow P}_{0,n}\otimes {\overrightarrow Q}_{0,n})(dx^n,dy^n)\label{equation203}\\
&\equiv{\mathbb{I}}_{X^n\rightarrow{Y^n}}({\overleftarrow P}_{0,n}, {\overrightarrow Q}_{0,n}).\label{equation7a}
\end{align}
 The notation ${\mathbb{I}}_{X^n\rightarrow{Y^n}}(\cdot,\cdot)$ indicates the functional dependence of $I(X^n\rightarrow{Y^n})$ on $\{{\overleftarrow P}_{0,n}, {\overrightarrow Q}_{0,n}\}$.

%%%%%%%%%%%%%%%%%%%%%%%%%%%%%%%%%%%%%%%%%%%%%%%%%%%%%%%%%%%%%%%%%%%%%%%%%%%%%%%%%%%%%%%%%%%%%

\section{Variational Equalities}\label{variational}

\par In this section we derive two variational equalities associated with $I(X^n\rightarrow{Y^n})$. First, we recall one of the variational equalities of mutual information $I(X^n;Y^n)\equiv\mathbb{I}_{X^n;Y^n}(P_{X^n},P_{Y^n|X^n})$ which can be expressed as maximization of relative entropy functionals as follows \cite{blahut1987}.\\
{\bf Max:} Given a channel ${P}_{Y^n|X^n}(dy^n|x^n)$, a source $P_{X^n}(dx^n)$, and any conditional distribution $\bar{P}_{X^n|Y^n}(dx^n|y^n)$ then
\begin{align}
&\mathbb{I}_{X^n;Y^n}(P_{X^n},P_{Y^n|X^n})=\sup_{\bar{P}_{X^n|Y^n}}\int\log\bigg(\frac{\bar{P}_{X^n|Y^n}(dx^n|y^n)}{P_{X^n}(dx^n)}\bigg)\nonumber\\
&\qquad\qquad\times{P}_{Y^n|X^n}(dy^n|x^n)\otimes P_{X^n}(dx^n)\label{equation2c}
\end{align}
and the supremum is achieved at $\bar{P}_{X^n|Y^n}(dx^n|y^n)=\frac{P_{Y^n|X^n}(dy^n|x^n)\otimes{P}_{X^n}(dx^n)}{\int_{{\cal X}_{0,n}}P_{Y^n|X^n}(dy^n|x^n)\otimes{P}_{X^n}(dx^n)}$.\\
\noi Let ${\bf P}(\cdot|\cdot)\in{\cal Q}^{\bf C1}({\cal X}^{\mathbb{N}};{\cal Y}^{\mathbb{N}})$ and ${\bf Q}(\cdot|\cdot)\in{\cal Q}^{\bf C2}({\cal Y}^{\mathbb{N}};{\cal X}^{\mathbb{N}})$, and let $P_{0,n}(dx^n,dy^n)={\overleftarrow P}_{0,n}(dx^n|y^{n-1})\otimes{\overrightarrow Q}_{0,n}(dy^n|x^n)$.\\
Next we derive the analogous version for directed information. Let ${\bf S}(\cdot|{\bf x})$ be any measure on $({\cal Y}^{\mathbb{N}},{\cal B}({\cal Y}^{\mathbb{N}}))$ satisfying the consistency condition\\
{\bf C3:} If $F\in{\cal B}({\cal Y}_{0,n}),$ then ${\bf S}(F|{\bf x})$ is a ${\cal B}({\cal X}_{0,n-1})-$measurable.\\
Denote this family of measures by ${\bf S}(\cdot|{\bf x})\in{\cal Q}^{\bf C3}({\cal Y}^{\mathbb{N}};{\cal X}^{\mathbb{N}})$. By Section~\ref{causal_channels}, for any family of measures ${\bf S}(\cdot|{\bf x})$  satisfying consistency condition {\bf C3}, there exists a collection  $\{s_n(\cdot;\cdot,\cdot)\in{\cal Q}({\cal Y}_n;{\cal Y}_{0,n-1}\times{\cal X}_{0,n-1}):n\in\mathbb{N}\}$ connected to ${\bf S}(\cdot|{\bf x})$ by
\begin{align}
&{\bf S}(D|{\bf x})=\int_{D_0}s_0(dy_0)\ldots\int_{D_n}s_n(dy_n;y^{n-1},x^{n-1})\nonumber\\
&\equiv\overleftarrow{S}_{0,n}(D_{0,n}|x^{n-1}),~D_{0,n}\tri\times_{i=0}^n{D}_i\in{\cal B}({\cal Y}_{0,n}).\label{equation110}
\end{align}
Unlike $\overrightarrow{Q}_{0,n}(\cdot|x^n)$ which is conditioned on $x^n\in{\cal X}_{0,n}$, the measure $\overleftarrow{S}_{0,n}(\cdot|x^{n-1})$ is conditioned on $x^{n-1}\in{\cal X}_{0,n-1}$.\\
Let ${\bf R}(\cdot|{\bf y})$ be any family of measures on $({\cal X}^{\mathbb{N}},{\cal B}({\cal X}^{\mathbb{N}}))$ satisfying the consistency condition\\
{\bf C4:} If $E\in{\cal B}({\cal X}_{0,n}),$ then ${\bf R}(E|{\bf y})$ is a ${\cal B}({\cal Y}_{0,n})-$measurable.\\
Denote this family of measures by ${\bf R}(\cdot|{\bf y})\in{\cal Q}^{\bf C4}({\cal X}^{\mathbb{N}};{\cal Y}^{\mathbb{N}})$. Similarly as before, for any family of measures ${\bf R}(\cdot|{\bf y})$  satisfying consistency condition {\bf C4}, there exists  $\{r_n(\cdot;\cdot,\cdot)\in{\cal Q}({\cal X}_n;{\cal X}_{0,n-1}\times{\cal Y}_{0,n}):n\in\mathbb{N}\}$ connected to ${\bf R}(\cdot|{\bf y})$ by
\begin{align}
&{\bf R}(G|{\bf y})=\int_{G_0}r_0(dx_0;y_0)\ldots\int_{G_n}r_n(dx_n;x^{n-1},y^{n})\nonumber\\
&\equiv{\overrightarrow{R}}_{0,n}(G_{0,n}|y^n),~G_{0,n}\tri\times_{i=0}^n{G}_i\in{\cal B}({\cal X}_{0,n}).\label{equation111}
\end{align}
Unlike $\overleftarrow{P}_{0,n}(\cdot|y^{n-1})$ which is conditioned on $y^{n-1}\in{\cal Y}_{0,n}$, the measure $\overrightarrow{R}_{0,n}(\cdot|y^{n})$ is conditioned on $y^{n}\in{\cal Y}_{0,n}$.\\
Define another joint distribution on $\big{(}{\cal X}^{\mathbb{N}}\times{\cal Y}^{\mathbb{N}},\odot_{n\in\mathbb{N}}{\cal B}({\cal X}_n)\odot{\cal B}({\cal Y}_n)\big{)}$ by $({\overleftarrow S}_{0,n}\otimes{\overrightarrow R}_{0,n})(dx^n,dy^n)$.\\
\noi The next theorem gives the two variational equalities.
\begin{theorem}(Variational Equalities)\label{variational_equalities}{\ \\}
{\bf Part A.} For any arbitrary measure $\bar{\nu}_{0,n}\in{\cal M}_1({\cal Y}_{0,n})$
\begin{align}
&{\mathbb{I}}_{X^n\rightarrow{Y^n}}({\overleftarrow P}_{0,n}, {\overrightarrow Q}_{0,n})\tri\mathbb{D}(P_{0,n}||{\overrightarrow\Pi}_{0,n})\nonumber\\
&=\inf_{\bar{\nu}_{0,n}\in{\cal M}_1({\cal Y}_{0,n})}\mathbb{D}({\overleftarrow P}_{0,n} \otimes {\overrightarrow Q}_{0,n}||{\overleftarrow P}_{0,n} \otimes\bar{\nu}_{0,n} )\label{equation11}\\
&=\inf_{\bar{\nu}_{0,n}\in{\cal M}_1({\cal Y}_{0,n})}\int\log \Big( \frac{{\overrightarrow Q}_{0,n}(dy^n|x^n)}{\bar{\nu}_{0,n}(dy^n)}\Big)\nonumber\\
&\qquad\qquad\times({\overleftarrow P}_{0,n}\otimes {\overrightarrow Q}_{0,n})(dx^n,dy^n)\label{equation10}
\end{align}
and the infimum in (\ref{equation11}) is achieved at $\bar{\nu}_{0,n}^*(dy^n)=\int_{{\cal X}_{0,n}}({\overleftarrow P}_{0,n}\otimes {\overrightarrow Q}_{0,n})(dx^n,{dy^n})\equiv {\nu}_{0,n}(dy^n)$.\\
{\bf Part B.} For any ${\bf S}(\cdot|{\bf x})\in{\cal Q}^{\bf C3}({\cal Y}^{\mathbb{N}};{\cal X}^{\mathbb{N}})$ and ${\bf R}(\cdot|{\bf y})\in{\cal Q}^{\bf C4}({\cal X}^{\mathbb{N}};{\cal Y}^{\mathbb{N}})$ then
\begin{align}
&{\mathbb{I}}_{X^n\rightarrow{Y^n}}({\overleftarrow P}_{0,n},{\overrightarrow Q}_{0,n})=\mathbb{D}(P_{0,n}||{\overrightarrow \Pi}_{0,n})\nonumber\\
&=\sup_{   \overleftarrow{S}_{0,n} \otimes \overrightarrow{R}_{0,n} }\int\log \Big( \frac{d({\overleftarrow S}_{0,n}\otimes {\overrightarrow R}_{0,n})}{d ( {\overrightarrow \Pi}_{0,n} ) }\Big){d}({\overleftarrow P}_{0,n}\otimes {\overrightarrow Q}_{0,n})\label{equation34}
\end{align}
and the supremum in (\ref{equation34}) is achieved when the RND satisfies
\begin{align}
\Lambda_{0,n}(x^n,y^n)\tri\frac{d({\overleftarrow P}_{0,n}\otimes {\overrightarrow Q}_{0,n})}{d({\overleftarrow S}_{0,n}\otimes {\overrightarrow R}_{0,n})}=1-a.s.,~n\in\mathbb{N}.\label{equation105}
\end{align}
Equivalently, for $i=0,1,\ldots,n$,
\begin{align}
\lambda_i(x^i,y^i)\tri\frac{p_i(dx_i;x^{i-1},y^{i-1})\otimes{q}_i(dy_i;y^{i-1},x^{i})}{s_i(dy_i;y^{i-1},x^{i-1})\otimes{r}_i(dx_i;x^{i-1},y^i)}=1-a.s.\label{equation102}
\end{align}
\end{theorem}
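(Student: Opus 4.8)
The plan is to reduce each part to a single relative-entropy chain-rule identity combined with nonnegativity of $\mathbb{D}(\cdot||\cdot)$, mirroring the classical derivation of (\ref{equation2c}). For Part A, I start from $I(X^n\rightarrow Y^n)=\mathbb{D}(P_{0,n}||{\overrightarrow\Pi}_{0,n})$ and definition {\bf P4}, which reads ${\overrightarrow\Pi}_{0,n}={\overleftarrow P}_{0,n}\otimes\nu_{0,n}$ with $\nu_{0,n}$ the $Y$-marginal of $P_{0,n}={\overleftarrow P}_{0,n}\otimes{\overrightarrow Q}_{0,n}$ (from {\bf P3}). The key step is the identity, valid for every $\bar{\nu}_{0,n}\in{\cal M}_1({\cal Y}_{0,n})$,
\begin{align}
\mathbb{D}({\overleftarrow P}_{0,n}\otimes{\overrightarrow Q}_{0,n}||{\overleftarrow P}_{0,n}\otimes\bar{\nu}_{0,n})=\mathbb{D}({\overleftarrow P}_{0,n}\otimes{\overrightarrow Q}_{0,n}||{\overleftarrow P}_{0,n}\otimes\nu_{0,n})+\mathbb{D}(\nu_{0,n}||\bar{\nu}_{0,n}),\nonumber
\end{align}
obtained by splitting $\log\frac{d({\overleftarrow P}_{0,n}\otimes{\overrightarrow Q}_{0,n})}{d({\overleftarrow P}_{0,n}\otimes\bar{\nu}_{0,n})}$ as $\log\frac{d({\overleftarrow P}_{0,n}\otimes{\overrightarrow Q}_{0,n})}{d({\overleftarrow P}_{0,n}\otimes\nu_{0,n})}+\log\frac{d\nu_{0,n}}{d\bar{\nu}_{0,n}}$ (the second Radon--Nikodym derivative is a function of $y^n$ alone, since the two reference measures share the kernel ${\overleftarrow P}_{0,n}(dx^n|y^{n-1})$) and then integrating, the last term contributing $\int\log\frac{d\nu_{0,n}}{d\bar{\nu}_{0,n}}\,d\nu_{0,n}=\mathbb{D}(\nu_{0,n}||\bar{\nu}_{0,n})$ because $\nu_{0,n}$ is the $Y$-marginal of ${\overleftarrow P}_{0,n}\otimes{\overrightarrow Q}_{0,n}$. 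Since $\mathbb{D}(\nu_{0,n}||\bar{\nu}_{0,n})\ge 0$ with equality iff $\bar{\nu}_{0,n}=\nu_{0,n}$, (\ref{equation11}) follows and the infimum is attained at $\bar{\nu}_{0,n}^{*}=\nu_{0,n}$; (\ref{equation10}) then results from cancelling the common kernel ${\overleftarrow P}_{0,n}$, exactly as in the passage from (\ref{equation33}) to (\ref{equation203}).

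For Part B, I use the elementary fact that for probability measures with ${\overleftarrow P}_{0,n}\otimes{\overrightarrow Q}_{0,n}\ll{\overleftarrow S}_{0,n}\otimes{\overrightarrow R}_{0,n}\ll{\overrightarrow\Pi}_{0,n}$,
\begin{align}
\int\log\Big(\frac{d({\overleftarrow S}_{0,n}\otimes{\overrightarrow R}_{0,n})}{d{\overrightarrow\Pi}_{0,n}}\Big)d({\overleftarrow P}_{0,n}\otimes{\overrightarrow Q}_{0,n})=\mathbb{D}(P_{0,n}||{\overrightarrow\Pi}_{0,n})-\mathbb{D}(P_{0,n}||{\overleftarrow S}_{0,n}\otimes{\overrightarrow R}_{0,n}),\nonumber
\end{align}
which is immediate on writing $\log\frac{d({\overleftarrow S}_{0,n}\otimes{\overrightarrow R}_{0,n})}{d{\overrightarrow\Pi}_{0,n}}=\log\frac{dP_{0,n}}{d{\overrightarrow\Pi}_{0,n}}-\log\frac{dP_{0,n}}{d({\overleftarrow S}_{0,n}\otimes{\overrightarrow R}_{0,n})}$ and integrating against $P_{0,n}$. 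Nonnegativity of the last divergence yields the upper bound $\mathbb{D}(P_{0,n}||{\overrightarrow\Pi}_{0,n})=I(X^n\rightarrow Y^n)$, with equality iff ${\overleftarrow S}_{0,n}\otimes{\overrightarrow R}_{0,n}=P_{0,n}$, i.e. $\Lambda_{0,n}=1$ a.s., which is (\ref{equation105}). To confirm the supremum is attained inside the admissible class, I disintegrate $P_{0,n}$ stage by stage: on Polish spaces each per-stage kernel $p_i(dx_i;x^{i-1},y^{i-1})\otimes q_i(dy_i;y^{i-1},x^{i})$ can be refactored by Bayes' rule as $s_i^{*}(dy_i;y^{i-1},x^{i-1})\otimes r_i^{*}(dx_i;x^{i-1},y^{i})$, and the families $\{s_i^{*}\}$, $\{r_i^{*}\}$ meet the measurability/consistency requirements of {\bf C3}, {\bf C4} by the construction of Section~\ref{causal_channels}; assembling them via (\ref{equation110})--(\ref{equation111}) gives ${\overleftarrow S}_{0,n}^{*}\otimes{\overrightarrow R}_{0,n}^{*}=P_{0,n}$. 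Finally, (\ref{equation102}) is the per-stage restatement of (\ref{equation105}), obtained by expressing $\Lambda_{0,n}$ as the product $\prod_{i=0}^{n}\lambda_i$ through the chain-rule factorizations of ${\overleftarrow P}_{0,n}\otimes{\overrightarrow Q}_{0,n}$ and ${\overleftarrow S}_{0,n}\otimes{\overrightarrow R}_{0,n}$.

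I expect the main difficulty to be measure-theoretic bookkeeping rather than the core argument: justifying the Radon--Nikodym cancellations (that the shared conditional kernels genuinely drop out, which rests on the consistency conditions {\bf C1}--{\bf C4} and disintegration on Polish spaces) and, in Part B, verifying that the refactored kernels $s_i^{*}$, $r_i^{*}$ inherit precisely the adaptedness demanded by {\bf C3} and {\bf C4} --- that ${\overleftarrow S}_{0,n}^{*}(dy^n|x^{n-1})$ is not a function of $x_n$ and ${\overrightarrow R}_{0,n}^{*}(dx^n|y^n)$ depends on ${\bf y}$ only through $y^n$. One must also handle the degenerate cases where some divergence is $+\infty$, in which the stated equalities persist under the usual conventions but the extremal claims need a short separate argument.
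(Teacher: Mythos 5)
Your proposal is correct and follows the approach the paper intends: the paper defers the proof to a citation, but both parts reduce, exactly as you do, to a chain-rule decomposition of relative entropy plus nonnegativity of the residual divergence, and your Bayes-rule refactorization $p_i\otimes q_i=s_i^{*}\otimes r_i^{*}$ for attainment in Part B is precisely the formula the paper records in its Discussion following the theorem. The measure-theoretic caveats you flag (absolute continuity, the $+\infty$ cases, and the adaptedness of $s_i^{*},r_i^{*}$ under {\bf C3}--{\bf C4}) are the right ones and are handled by the consistency conditions and disintegration on Polish spaces, so no essential idea is missing.
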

\begin{proof}
The derivation is shown in detail in \cite{charalambous-stavrou2013a}.
\end{proof}

\noi{\bf Discussion.} Clearly, (\ref{equation2c}) is also equivalent to (i.e., $P_{Y^n}$ is fixed)
\begin{align}
&\sup_{\bar{P}_{X^n|Y^n}\otimes{P}_{Y^n}}\int\log\bigg(\frac{\bar{P}_{X^n|Y^n}(dx^n|y^n)\otimes{P}_{Y^n}(dy^n)}{P_{X^n}(dx^n)\times{P}_{Y^n}(dy^n)}\bigg)\nonumber\\
&\qquad\qquad\times{P}_{Y^n|X^n}(dy^n|x^n)\otimes{P}_{X^n}(dx^n)\label{equation112}
\end{align}
since the RND in (\ref{equation112}) is another version of the one in (\ref{equation2c}). Thus, (\ref{equation34}) is the analogue of (\ref{equation112}), in which the directed information function is utilized together with the decomposition $\overleftarrow{S}_{0,n}\otimes\overrightarrow{R}_{0,n}$ of the joint distribution. Suppose $q_i(\cdot;y^{i-1},x^i)\ll{s}_i(\cdot;y^{i-1},x^{i-1}), \forall i$  and $\overleftarrow{S}_{0,n}$ is  fixed,  generated by ${\bf P}(\cdot|\cdot)\in{\cal Q}^{\bf C1}({\cal X}^{\mathbb{N}};{\cal Y}^{\mathbb{N}})$ and ${\bf Q}(\cdot|\cdot)\in{\cal Q}^{\bf C2}({\cal Y}^{\mathbb{N}};{\cal X}^{\mathbb{N}})$.  Then from (\ref{equation102}):
\begin{align}
&r_i(dx_i;x^{i-1},y^i)=    
\frac{q_i(dy_i;y^{i-1},x^i)\otimes{p}_i(dx_i;x^{i-1},y^{i-1})}{\int_{{\cal X}_i}q_i(dy_i;y^{i-1},x^i)\otimes{p}_i(dx_i;x^{i-1},y^{i-1})} \nonumber \\
&\overrightarrow{R}_{0,n}(\cdot|y^n)= \otimes_{i=0}^n\frac{q_i(dy_i;y^{i-1},x^i)\otimes{p}_i(dx_i;x^{i-1},y^{i-1})}{\int_{{\cal X}_i}q_i(dy_i;y^{i-1},x^i)\otimes{p}_i(dx_i;x^{i-1},y^{i-1})}.\nonumber
\end{align}
The previous expression is the analogue of $\bar{P}_{X^n|Y^n}$ in (\ref{equation2c}). 

%%%%%%%%%%%%%%%%%%%%%%%%%%%%%%%%%%%%%%%%%%%%%%%%%%%%%%%%%%%%%%%%%%%%%%%%%%%%%%%%%%%%%%%%%%%

\section{Applications to Nonanticipative RDF}\label{applications}
%%%%%%%%%%%%%%%%%%%%%%%%%%%%%%%%%%%%%%%%%%%%%%%%%%%%%%%%%%%%%%%%%%%%%%%%
Our interest is now focused on nonanticipative RDF, which is motivated by source-channel matching via symbol-by-symbol transmission, for sources with memory. Unlike classical RDF, the solution of nonanticipative RDF is causal and hence, it can be realized by  an encoder-channel-decoder which process information causally \cite{kourtellaris-charalambous-stavrou2013a,charalambous-stavrou-ahmed2012a}. Moreover, as it is shortly shown, nonanticipative RDF is relatively easy to  compute when compared to the classical RDF (which is only computed for a small class of sources, i.e., memoryless and Gaussian).
\par First, we recall Gorbunov-Pinsker's definition of nonanticipatory $\epsilon$-entropy \cite{gorbunov-pinsker} since we will establish its equivalence to the nonanticipative RDF.
Introduce the measurable distortion function by $d_{0,n}(x^n,y^n):{\cal X}_{0,n}\times{\cal Y}_{0,n}\mapsto[0,\infty)$, $d_{0,n}(x^n,y^n)\tri\sum_{i=0}^n\rho_{0,i}(x^i,y^i)$, and let $d_{0,n}(x^n,y^n)\tri\sum_{i=0}^n\rho(x_i,y_i)$ for single letter.
Introduce the fidelity set by
\begin{align*}
&{\cal Q}_{0,n}(D)\tri\Big\{P_{Y^n|X^n}(dy^n|x^n):\\
&\frac{1}{n+1}\int{d}_{0,n}(x^n,y^n)P_{Y^n|X^n}(dy^n|x^n)\otimes{P}_{X^n}(dx^n)\leq{D}\Big\}.
\end{align*}
\noi Gorbunov and Pinsker restricted the set ${\cal Q}_{0,n}(D)$ to those reproduction distributions which satisfy the Markov chain (MC) $X_{n+1}^\infty\leftrightarrow{X^n}\leftrightarrow{Y^n}\Leftrightarrow{P}_{Y^n|X^{\infty}}(dy^n|x^{\infty})={P}_{Y^n|X^n}(dy^n|x^n)-a.s., \forall n\geq 0$. Then they introduced the nonanticipatory $\epsilon$-entropy defined by
\begin{align}
R_{0,n}^{\varepsilon}(D)\tri\inf_{\substack{{\cal Q}_{0,n}(D):~X^n_{i+1}\leftrightarrow{X^i}\leftrightarrow{Y^i}\\i=0,1,\ldots,n-1}}I(X^n;Y^n).\label{equation15}
\end{align} 
Thus, the difference between the classical RDF and nonanticipatory $\epsilon$-entropy (\ref{equation15}) is the presence of the MC which implies that for each $i$, $Y_i$ is a function of the past and present source symbols $\{X_0,X_1,\ldots,X_i\}$, and independent of the future source symbols $\{X_{i+1},\ldots,X^{n}\}$.
It can be shown that the MC $X_{i+1}^n\leftrightarrow{X^i}\leftrightarrow{Y^i},~i=0,1,\ldots,n-1$, is equivalent to ${P}_{Y^i|X^i}(dy^i|x^i)=\overrightarrow{P}_{Y^i|X^i}(dy^i|x^i)-a.s.$, $i=0,1,\ldots,n-1$. Utilizing this MC, then 
\begin{align}
&I(X^n;{Y^n})=\int\log\Big(\frac{{\overrightarrow P}_{Y^n|X^n}(dy^n|x^n)}{{P}_{Y^n}(dy^n)}\Big)\nonumber\\
&\times{\overrightarrow P}_{Y^n|X^n}(dy^n|x^n)\otimes{P}_{X^n}(dx^n)\equiv{\mathbb I}_{X^n\rightarrow{Y^n}}(P_{X^n},{\overrightarrow P}_{Y^n|X^n}) \nonumber 
\end{align}
where the notation ${\mathbb I}_{X^n\rightarrow{Y^n}}(P_{X^n},{\overrightarrow P}_{Y^n|X^n})$ is used to point out the functional dependence on $\{P_{X^n},{\overrightarrow P}_{Y^n|X^n}\}$. Moreover, utilizing the previous expression it is easy to show that nonanticipatory $\epsilon$-entropy (\ref{equation15}) is equivalent to the following definition of nonanticipative RDF.

\begin{definition}(Nonanticipative RDF) 
Let $\overrightarrow{\cal Q}_{0,n}(D)$ (assuming is non-empty) denotes the fidelity set
\begin{align*}
&\overrightarrow{\cal Q}_{0,n}(D)\tri\Big\{{\overrightarrow P}_{Y^n|X^n}(y^n|x^{n}):\ell_{d_{0,n}}(\overrightarrow{P}_{Y^n|X^n})\tri \nonumber \\
&\frac{1}{n+1}\int{d}_{0,n}(x^n,y^{n}){\overrightarrow P}_{Y^n|X^n}(dy^n|x^n)\otimes{P}_{X^n}(dx^n)\leq{D}\Big\}.
\end{align*}
The nonanticipative information RDF is defined by
\begin{align}
{R}^{na}_{0,n}(D)=\inf_{\overrightarrow{P}_{Y^n|X^n}\in\overrightarrow{\cal Q}_{0,n}(D)}{\mathbb I}_{X^n\rightarrow{Y^n}}(P_{X^n},{\overrightarrow P}_{Y^n|X^n}).\label{ex12} 
\end{align}
\end{definition}

Next, we introduce some assumptions and we establish existence of the infimum in (\ref{ex12}) and hence also of (\ref{equation15}).
\begin{assumption}(Main asumptions)\label{conditions-existence}{\ \\}
{\bf(A1)} ${\cal Y}_{0,n}$ is a compact Polish space, ${\cal X}_{0,n}$ is a Polish space;\\
{\bf(A2)} for all $h(\cdot){\in}BC({\cal Y}_{0,n})$,  $(x^{n},y^{n-1})\in{\cal X}_{0,n}\times{\cal Y}_{0,n-1}\mapsto\int_{{\cal Y}_n}h(y)P_{Y|Y^{n-1},X^n}(dy|y^{n-1},x^n)\in\mathbb{R}$ 
is continuous jointly in  $(x^{n},y^{n-1})\in{\cal X}_{0,n}\times{\cal Y}_{0,n-1}$;\\
{\bf(A3)} $d_{0,n}(x^n,\cdot)$ is continuous on ${\cal Y}_{0,n}$;\\
{\bf(A4)} There exist  $(x^n,y^{n})\in{\cal X}_{0,n}\times{\cal Y}_{0,n}$ such that  $d_{0,n}(x^n,y^{n})<D$.
\end{assumption}
\noi Note that since ${\cal Y}_{0,n}$ is assumed to be a compact Polish space, then by \cite{dupuis-ellis97}, probability measures on ${\cal Y}_{0,n}$ are weakly compact. Moreover, the following result can be obtained, which we will use to show existence of the infimum in (\ref{ex12}).
\begin{lemma}\label{compactness2}\cite{stavrou-charalambous2013}
Suppose Assumption~\ref{conditions-existence}, {\bf(A1)}, {\bf(A2)} hold. Then\\
{\bf(1)} The set ${\cal Q}^{\bf C2}({\cal Y}_{0,n};{\cal X}_{0,n})$ is weakly compact.\\
{\bf(2)} $\mathbb{I}_{X^n\rightarrow{Y^n}}(P_{X^n},\overrightarrow{P}_{Y^n|X^n})$ is lower semicontinuous on ${\cal Q}^{\bf C2}({\cal Y}_{0,n};{\cal X}_{0,n})$ for a fixed ${\cal M}_1({\cal X}_{0,n})$.\\
{\bf(3)} Under the additional Assumption~\ref{conditions-existence},~{\bf(A3)}, {\bf(A4)}  the set ${\overrightarrow{\cal Q}}_{0,n}(D)$ is a closed subset of ${\cal Q}^{\bf C2}({\cal Y}_{0,n};{\cal X}_{0,n})$.
\end{lemma}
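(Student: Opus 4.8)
The plan is to treat the three claims in turn. The main ingredients are Prokhorov's theorem together with a tightness estimate, the joint (sequential) lower semicontinuity of relative entropy under weak convergence \cite{dupuis-ellis97}, and the identity $I(X^n\rar{Y^n})=\mathbb{D}(P_{0,n}\|{\overrightarrow\Pi}_{0,n})$ of (\ref{equation33}) specialized to the nonanticipative (no-feedback) situation, where ${\overleftarrow P}_{0,n}(dx^n|y^{n-1})=P_{X^n}(dx^n)$, so that $P_{0,n}=P_{X^n}\otimes{\overrightarrow P}_{Y^n|X^n}$ and ${\overrightarrow\Pi}_{0,n}=P_{X^n}\otimes\nu_{0,n}$ with $\nu_{0,n}$ the ${\cal Y}_{0,n}$-marginal of $P_{0,n}$. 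For Part~{\bf(1)} I would fix $P_{X^n}\in{\cal M}_1({\cal X}_{0,n})$, identify each ${\overrightarrow P}_{Y^n|X^n}\in{\cal Q}^{\bf C2}({\cal Y}_{0,n};{\cal X}_{0,n})$ with the joint law $P_{X^n}\otimes{\overrightarrow P}_{Y^n|X^n}$ on ${\cal X}_{0,n}\times{\cal Y}_{0,n}$, and topologize ${\cal Q}^{\bf C2}$ accordingly. Tightness of this family is then immediate: its ${\cal X}_{0,n}$-marginal is the fixed measure $P_{X^n}$, tight because ${\cal X}_{0,n}$ is Polish by {\bf(A1)}, while its ${\cal Y}_{0,n}$-marginal is supported on the compact Polish space ${\cal Y}_{0,n}$ of {\bf(A1)}; a family of joint laws whose marginals on each factor are tight is itself tight, so Prokhorov's theorem gives relative weak compactness. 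It then remains to show ${\cal Q}^{\bf C2}$ is weakly closed: the constraint ``${\cal X}_{0,n}$-marginal $=P_{X^n}$'' is closed since marginal projection is weakly continuous, and the real work is to show that the nonanticipation/causal factorization ${\overrightarrow P}_{Y^n|X^n}=\otimes_{i=0}^n q_i(dy_i;y^{i-1},x^i)$ passes to weak limits. This is exactly where {\bf(A2)} is used: confining attention to kernels whose one-step conditionals $y\mapsto\int h\, dq_i$ are jointly continuous for $h\in BC({\cal Y}_{0,n})$ makes both the causal product structure and this continuity stable along weakly convergent nets, so the limit kernel again lies in ${\cal Q}^{\bf C2}$; a weakly closed subset of a relatively weakly compact set is weakly compact. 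I expect this closedness step --- that the causal product structure survives the weak limit, with {\bf(A2)} the hypothesis that makes it do so --- to be the main obstacle of the lemma.

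For Part~{\bf(2)}, using the specialization above I would write $\mathbb{I}_{X^n\rar{Y^n}}(P_{X^n},{\overrightarrow P}_{Y^n|X^n})=\mathbb{D}\big(P_{X^n}\otimes{\overrightarrow P}_{Y^n|X^n}\,\big\|\,P_{X^n}\otimes\nu_{0,n}\big)$, with $\nu_{0,n}(dy^n)=\int_{{\cal X}_{0,n}}P_{X^n}(dx^n)\otimes{\overrightarrow P}_{Y^n|X^n}(dy^n|x^n)$. Along a weakly convergent sequence in ${\cal Q}^{\bf C2}$, the joint laws $P_{X^n}\otimes{\overrightarrow P}_{Y^n|X^n}$ converge weakly by construction, hence so do their ${\cal Y}_{0,n}$-marginals $\nu_{0,n}$ (marginalization is weakly continuous), and therefore so do the products $P_{X^n}\otimes\nu_{0,n}$ (forming the product of a fixed measure with a weakly convergent sequence is weakly continuous, by dominated convergence against bounded continuous test functions). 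The joint lower semicontinuity of $\mathbb{D}(\cdot\|\cdot)$ under weak convergence \cite{dupuis-ellis97} then makes the composition ${\overrightarrow P}_{Y^n|X^n}\mapsto\mathbb{I}_{X^n\rar{Y^n}}(P_{X^n},{\overrightarrow P}_{Y^n|X^n})$ lower semicontinuous on ${\cal Q}^{\bf C2}({\cal Y}_{0,n};{\cal X}_{0,n})$.

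For Part~{\bf(3)}, the constraint functional $\ell_{d_{0,n}}({\overrightarrow P}_{Y^n|X^n})=\frac{1}{n+1}\int d_{0,n}(x^n,y^n)\,P_{X^n}(dx^n)\otimes{\overrightarrow P}_{Y^n|X^n}(dy^n|x^n)$ is affine in ${\overrightarrow P}_{Y^n|X^n}$, and under {\bf(A3)} together with compactness of ${\cal Y}_{0,n}$ from {\bf(A1)} the map $y^n\mapsto d_{0,n}(x^n,y^n)$ is continuous and bounded; with the ${\cal X}_{0,n}$-marginal held fixed at $P_{X^n}$, this makes $\ell_{d_{0,n}}$ lower semicontinuous for the topology of Part~{\bf(1)}. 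Hence ${\overrightarrow{\cal Q}}_{0,n}(D)=\ell_{d_{0,n}}^{-1}\big((-\infty,D]\big)$ is the preimage of a closed set under a lower semicontinuous map, so it is a weakly closed subset of ${\cal Q}^{\bf C2}({\cal Y}_{0,n};{\cal X}_{0,n})$, and {\bf(A4)} ensures it is nonempty. Combined with Parts~{\bf(1)}--{\bf(2)}, this exhibits ${R}^{na}_{0,n}(D)$ of (\ref{ex12}) as the infimum of a lower semicontinuous functional over a nonempty weakly compact set, so the infimum is attained --- which is the purpose the lemma serves below.
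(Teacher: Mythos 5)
The paper itself does not prove Lemma~\ref{compactness2}; it is quoted from \cite{stavrou-charalambous2013}, so only your argument can be assessed. Your overall architecture is the right one --- weak compactness of the admissible set, lower semicontinuity of the directed-information functional via joint lower semicontinuity of relative entropy \cite{dupuis-ellis97}, and closedness of the fidelity set, assembled into Weierstrass --- and Parts {\bf(2)} and {\bf(3)} are essentially correct once the topology of Part {\bf(1)} is in place (with the caveat that {\bf(A3)} gives continuity of $d_{0,n}(x^n,\cdot)$ only in $y^n$, so the lower semicontinuity of $\ell_{d_{0,n}}$ needs the kernel-level topology in which test functions are integrable in $x^n$ and continuous in $y^n$, not joint weak convergence against $BC({\cal X}_{0,n}\times{\cal Y}_{0,n})$).

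The genuine gap is in Part {\bf(1)}, at exactly the step you flag as ``the main obstacle'' and then do not carry out. Your plan is Prokhorov (tightness is fine) plus weak closedness of the set of joint laws whose conditional admits the causal factorization $\otimes_{i=0}^n q_i(dy_i;y^{i-1},x^i)$, and you propose that {\bf(A2)} supplies this closedness. It cannot, for two reasons. First, {\bf(A2)} is a continuity property of the individual one-step kernels in the family, and a set defined by a continuity requirement is in general \emph{not} closed under weak limits (limits of continuous kernels need not be continuous), so restricting to {\bf(A2)}-kernels does not by itself produce a closed set. Second, the causal factorization is a conditional-independence-type constraint on the joint law, and such constraints are notoriously not preserved under weak convergence of joint measures; the fixed ${\cal X}_{0,n}$-marginal does not rescue this by a soft argument. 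The mechanism that actually works, and the one the cited reference uses, is different: each one-step kernel space ${\cal Q}({\cal Y}_i;{\cal Y}_{0,i-1}\times{\cal X}_{0,i})$ is compact in the weak$^*$ (Young-measure/Alaoglu) topology on $L^\infty_{w^*}$ dual to $L^1(\,\cdot\,;C({\cal Y}_i))$, because ${\cal Y}_i$ is compact Polish; the finite product of these compact sets is compact; and {\bf(A2)} is then invoked to show that the iterated-product map $(q_0,\ldots,q_n)\mapsto\otimes_{i=0}^n q_i$ is continuous, so that ${\cal Q}^{\bf C2}({\cal Y}_{0,n};{\cal X}_{0,n})$ is the continuous image of a compact set, hence compact --- no separate closedness argument for the causal structure is needed. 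Your identification of kernels with joint laws also quietly changes the object (distinct kernels agreeing $P_{X^n}$-a.e.\ are conflated), which is harmless for the RDF but should be said. To repair your write-up, replace the ``Prokhorov plus closedness of the factorization'' step by the Alaoglu-on-each-factor plus continuity-of-the-product argument, which is where {\bf(A2)} genuinely enters.
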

\noi The next theorem establishes existence of the minimizing reproduction distribution for (\ref{ex12}).
\begin{theorem}\cite{stavrou-charalambous2013}(Existence)\label{existence_rd}
Suppose Assumption~\ref{conditions-existence} hold. Then the infimum in (\ref{ex12}) is achieved and ${R}^{na}_{0,n}(D)$ is finite. 
\end{theorem}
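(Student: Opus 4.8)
The plan is to use the direct method of the calculus of variations: show that the feasible set $\overrightarrow{\cal Q}_{0,n}(D)$ is nonempty and weakly compact, that the objective $\overrightarrow{P}_{Y^n|X^n}\mapsto{\mathbb I}_{X^n\rightarrow{Y^n}}(P_{X^n},\overrightarrow{P}_{Y^n|X^n})$ is lower semicontinuous on it, and then invoke the generalized Weierstrass theorem (a lower semicontinuous, extended-real-valued function on a nonempty compact set attains its infimum). Finiteness is then handled separately by sandwiching $R^{na}_{0,n}(D)$ between $0$ and the directed information of one explicitly constructed feasible reproduction kernel.

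First I would record nonemptiness of $\overrightarrow{\cal Q}_{0,n}(D)$: by {\bf(A4)} there is a pair $(x^n,y^n)$ with $d_{0,n}(x^n,y^n)<D$, and using {\bf(A3)} together with the additive single-letter structure of $d_{0,n}$ one builds a nonanticipative kernel meeting the average distortion constraint, so $\overrightarrow{\cal Q}_{0,n}(D)\neq\emptyset$. Second, by Lemma~\ref{compactness2}{\bf(1)} the set ${\cal Q}^{\bf C2}({\cal Y}_{0,n};{\cal X}_{0,n})$ is weakly compact, and by Lemma~\ref{compactness2}{\bf(3)} (using {\bf(A3)}, {\bf(A4)}) the fidelity set $\overrightarrow{\cal Q}_{0,n}(D)$ is a closed subset of it; since a closed subset of a compact set is compact, $\overrightarrow{\cal Q}_{0,n}(D)$ is weakly compact. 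Third, by Lemma~\ref{compactness2}{\bf(2)}, for the fixed source $P_{X^n}\in{\cal M}_1({\cal X}_{0,n})$ the functional ${\mathbb I}_{X^n\rightarrow{Y^n}}(P_{X^n},\cdot)$ is lower semicontinuous on ${\cal Q}^{\bf C2}({\cal Y}_{0,n};{\cal X}_{0,n})$, hence a fortiori on $\overrightarrow{\cal Q}_{0,n}(D)$. Applying the generalized Weierstrass theorem to this lower semicontinuous functional on the nonempty compact set $\overrightarrow{\cal Q}_{0,n}(D)$ produces a minimizer $\overrightarrow{P}^{*}_{Y^n|X^n}$, i.e. the infimum in (\ref{ex12}) is achieved. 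Equivalently, one takes a minimizing sequence, extracts a weakly convergent subsequence by compactness, keeps it feasible by closedness, and passes to the $\liminf$ by lower semicontinuity.

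It remains to show ${R}^{na}_{0,n}(D)$ is finite. The lower bound ${R}^{na}_{0,n}(D)\geq 0$ is immediate since ${\mathbb I}_{X^n\rightarrow{Y^n}}$ is a relative entropy (cf.~(\ref{equation33})) and therefore nonnegative. For the upper bound it suffices to exhibit one kernel in $\overrightarrow{\cal Q}_{0,n}(D)$ with finite directed information: using the compactness of ${\cal Y}_{0,n}$ one quantizes the reproduction alphabet by a finite net and takes a causal reproduction of nearest-point type that still meets the distortion bound by virtue of {\bf(A4)} and {\bf(A3)}; for such a kernel the reproduced process takes finitely many values, so ${\mathbb I}_{X^n\rightarrow{Y^n}}(P_{X^n},\overrightarrow{P}_{Y^n|X^n})\leq(n+1)\log M<\infty$ with $M$ the size of the net. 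Hence $0\leq {R}^{na}_{0,n}(D)<\infty$.

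The genuinely substantive content is concentrated in the three assertions of Lemma~\ref{compactness2} imported from \cite{stavrou-charalambous2013}: weak compactness of ${\cal Q}^{\bf C2}$ (relying on {\bf(A1)}, {\bf(A2)} and tightness/Prokhorov-type reasoning together with preservation of the nonanticipation constraints under weak limits), lower semicontinuity of directed information in the channel (resting on joint lower semicontinuity of relative entropy plus a disintegration argument), and closedness of $\overrightarrow{\cal Q}_{0,n}(D)$ (using {\bf(A3)}, {\bf(A4)} and continuity of $y^n\mapsto d_{0,n}(x^n,y^n)$ under weak convergence). Given these, the only points needing care inside the theorem itself are confirming that the relevant weak topology on ${\cal Q}^{\bf C2}$ is metrizable so that compactness may be used sequentially (or else phrasing everything with nets), and checking that at least one feasible kernel has finite cost so that the attained infimum is not $+\infty$; I expect this last item — the explicit finite-cost feasible construction — to be the fiddliest bookkeeping, though not conceptually difficult.
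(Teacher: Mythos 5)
Your argument is exactly the one the paper intends: Lemma~\ref{compactness2} supplies weak compactness of ${\cal Q}^{\bf C2}({\cal Y}_{0,n};{\cal X}_{0,n})$, closedness of $\overrightarrow{\cal Q}_{0,n}(D)$ under {\bf(A3)}--{\bf(A4)}, and lower semicontinuity of ${\mathbb I}_{X^n\rightarrow{Y^n}}(P_{X^n},\cdot)$, after which the generalized Weierstrass theorem (equivalently, your minimizing-sequence argument) yields the minimizer; the paper delegates precisely these steps to \cite{stavrou-charalambous2013}. The one caveat is that nonemptiness of $\overrightarrow{\cal Q}_{0,n}(D)$ and the existence of a finite-cost feasible kernel do not really follow from {\bf(A4)} alone, which supplies only a single pair $(x^n,y^n)$ rather than a causal, measurable selection of low-distortion reproductions for $P_{X^n}$-a.e.\ $x^n$ --- but the paper itself assumes nonemptiness in the definition of the fidelity set, so this is not a defect relative to the paper's own treatment.
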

\begin{remark}(Summary) 
Utilizing Theorem~\ref{existence_rd} and \cite[Theorems 3, 4]{gorbunov-pinsker}, for a stationary source and single letter distortion, $\lim_{n\rightarrow\infty}{R}^{na}_{0,n}(D)$ exists, it is finite, and the optimal reproduction distribution is realizable by stationary source-reproduction pairs $\{(X_i,Y_i):~i=0,1,\ldots,n\}$. Hence, the $(n+1)$-fold convolution conditional distribution $\overrightarrow{P}_{Y^n|X^n}(dy^n|x^n)=\otimes^n_{i=0}P_{Y_i|Y^{i-1},X^i}$ $(dy_i|y^{i-1},x^i)-a.s.$, is a convolution of stationary conditional distributions.
\end{remark}
\subsection{Optimal Stationary Reproduction Distribution of Nonanticipative RDF}

\par Next, we give the solution of ${R}^{na}_{0,n}(D)$ assuming the reproduction is stationary. 
By utilizing the Lagrange duality theorem we obtain the unconstrained problem.
\begin{align}
{R}^{na}_{0,n}(D) &= \sup_{s\leq{0}}\inf_{\substack{{\overrightarrow{P}_{Y^n|X^n}}\\
\in{Q}^{\bf C2}({\cal Y}_{0,n};{\cal X}_{0,n})}} \Big\{{\mathbb I}_{X^n\rightarrow{Y^n}}(P_{X^n},\overrightarrow{P}_{Y^n|X^n})\nonumber\\
&-s(\ell_{{d}_{0,n}}(\overrightarrow{P}_{Y^n|X^n})-D)\Big\}. \label{ex13}
\end{align}
Note that ${\overrightarrow{P}_{Y^n|X^n}} \in {\cal Q}^{\bf C2}({\cal Y}_{0,n};{\cal X}_{0,n})$ are probability measures  therefore, one should introduce another set of Lagrange multipliers to obtain an unconstrained problem free of such a constraint. For the rest of the paper, we consider  $d_{0,n}(x^n,y^n)\tri\sum_{i=0}^n\rho(T^i{x^n},T^i{y^n})$, where $T^i{x^n}$ is the shift operator on $x^n$ (similarly for $T^i{y^n}$). Then by computing the Gateaux differential of (\ref{ex13}) (for the stationary case) we obtain the following (see  \cite{stavrou-charalambous2013}).\\ 
{\bf{(1)}} The infimum in (\ref{ex13}) is attained at  $\overrightarrow{P}^*_{Y^n|X^n} \in\overrightarrow{\cal Q}_{0,n}(D)$: 
\begin{align}
&\overrightarrow{P}^*_{Y^n|X^n}(dy^n|x^n)=\otimes_{i=0}^n{P}^*_{Y_i|Y^{i-1},X^i}(dy_i|y^{i-1},x^i)\nonumber\\
&=\otimes_{i=0}^n\frac{e^{s \rho(T^i{x^n},T^i{y^n})}P^*_{Y_i|Y^{i-1}}(dy_i|y^{i-1})}{\int_{{\cal Y}_i} e^{s \rho(T^i{x^n},T^i{y^n})} P^*_{Y_i|Y^{i-1}}(dy_i|y^{i-1})}, \: s\leq 0.\label{ex14}
\end{align}
{\bf{(2)}} The nonanticipative RDF is given by
\begin{align}
&{R}^{na}_{0,n}(D)=sD(n+1) -\sum_{i=0}^n\int\log \Big( \int_{{\cal Y}_i} e^{s\rho(T^i{x^n},T^i{y^n})} \nonumber\\
&P^*_{Y_i|Y^{i-1}}(dy_i|y^{i-1})\Big){\overrightarrow{P}^*_{Y^{i-1}|X^{i-1}}(dy^{i-1}|x^{i-1})\otimes{P}_{X^i}(dx^i)}.\nonumber 
\end{align}
If ${R}^{na}_{0,n}(D) > 0$ then $ s < 0$  and
\begin{align*}
\frac{1}{n+1}\sum_{i=0}^n\int\rho(T^i{x^n},T^i{y^n})(P_{X^i}\otimes\overrightarrow{P}^*_{Y^{i}|X^{i}})(dx^i,dy^i)=D.
\end{align*}
By (\ref{ex14})  the optimal reproduction distribution  is nonanticipative (causal) with respect to the source, and by stationarity all elements in the product are identical. Moreover, if the distortion $\rho(T^ix^n,T^iy^n)=\rho(x_i,T^iy^n)$,~$\forall{i}$, then $P^*_{Y_i|Y^{i-1},X^i}(dy_i|y^{i-1},x^i)=P^*_{Y_i|Y^{i-1},X_i}(dy_i|y^{i-1},x_i)-a.s.,~\forall{i}$, i.e.,  it depends only on the most recent source symbol.
%%%%%%%%%%%%%%%%%%%%%%%%%%%%%%%%%%%%%%%%%%%%%%%%%%%%%%%%%%%%%%%%%
\subsection{BAA for Stationary Nonanticipative RDF}
\par Now, we invoke variational equality (\ref{equation10}), and (\ref{ex14}), to compute BAA for the nonanticipative RDF (stationary case).
\begin{theorem}(Double Minimization)
\label{double_minimization}{\ \\}
{\bf{(a)}} ${R}^{na}_{0,n}(D)$, can be expressed as a double minimization:
\begin{align}
&{R}^{na}_{0,n}(D)=sD(n+1)+\min_{\bar{P}_{Y^n}\in{\cal M}_1({\cal Y}_{0,n})}\min_{\overrightarrow{P}_{Y^n|X^n}\in{\cal Q}^{\bf C2}({\cal Y}_{0,n};{\cal X}_{0,n})}\nonumber\\
&\Big{\{}\int\log\Big(\frac{\overrightarrow{P}_{Y^n|X^n}(dy^n|x^n)}{\bar{P}_{Y^n}(dy^n)}\Big)(P_{X^n}\otimes\overrightarrow{P}_{Y^n|X^n})(dx^n,dy^n)\nonumber\\
&-s\int{d}_{0,n}(x^n,y^n)(P_{X^n}\otimes\overrightarrow{P}_{Y^n|X^n})(dx^n,dy^n)\Big{\}}.\label{eq.13}
\end{align}
{\bf{(b)}} For fixed $\overrightarrow{P}_{Y^n|X^n}$, the minimization over $\bar{P}_{Y^n}\in{\cal M}_1({\cal Y}_{0,n})$ is
\begin{align*}
\bar{P}_{Y^n}^*(dy^n)=\int(P_{X^n}\otimes\overrightarrow{P}_{Y^n|X^n})(dx^n,dy^n).
\end{align*}
{\bf{(c)}} For fixed $\bar{P}_{Y^n}$, the minimization over $\overrightarrow{P}_{Y^n|X^n}\in{\cal Q}^{\bf C2}({\cal Y}_{0,n};{\cal X}_{0,n})$ is given by (\ref{ex14}).
\end{theorem}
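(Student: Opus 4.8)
The plan is to assemble the three parts directly from results already in hand: the Lagrangian reformulation \eqref{ex13}, Part A of the Variational Equality Theorem (specifically \eqref{equation10}), and the closed-form infimizer \eqref{ex14}. The only genuinely new content is recognizing that the single minimization over $\overrightarrow{P}_{Y^n|X^n}$ in \eqref{ex13}, after the directed information is rewritten as an infimum over $\bar{P}_{Y^n}$, becomes a joint (double) minimization, and that this joint minimization is legitimately interchangeable. I expect the interchange of the two infima to be the main obstacle to make rigorous, since one must argue that the order does not matter (e.g., by a standard $\inf\inf=\inf\inf$ argument valid whenever both are genuine infima over fixed sets, together with the fact that the objective is jointly well-defined and bounded below once restricted to the fidelity set).

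For \textbf{part (a)}, I would start from \eqref{ex13}, hold the multiplier $s\le 0$ fixed (its outer supremum is carried along untouched, or equivalently one works at the optimal $s$ as in the displayed expression), and in the inner problem replace ${\mathbb I}_{X^n\rightarrow{Y^n}}(P_{X^n},\overrightarrow{P}_{Y^n|X^n})$ by the right-hand side of \eqref{equation10}. Since $P_{X^n}$ is fixed, ${\overleftarrow P}_{0,n}$ in \eqref{equation10} collapses to $P_{X^n}$ and ${\overrightarrow Q}_{0,n}$ to $\overrightarrow{P}_{Y^n|X^n}$, so \eqref{equation10} gives
\begin{align*}
{\mathbb I}_{X^n\rightarrow{Y^n}}(P_{X^n},\overrightarrow{P}_{Y^n|X^n})
=\inf_{\bar{P}_{Y^n}\in{\cal M}_1({\cal Y}_{0,n})}\int\log\Big(\frac{\overrightarrow{P}_{Y^n|X^n}(dy^n|x^n)}{\bar{P}_{Y^n}(dy^n)}\Big)(P_{X^n}\otimes\overrightarrow{P}_{Y^n|X^n})(dx^n,dy^n).
\end{align*}
Substituting this into \eqref{ex13}, adding the (constant-in-both-arguments) term $sD(n+1)$ and the distortion penalty $-s\int d_{0,n}\,(P_{X^n}\otimes\overrightarrow{P}_{Y^n|X^n})$, and then swapping the now-nested infimum over $\bar{P}_{Y^n}$ with the infimum over $\overrightarrow{P}_{Y^n|X^n}$, yields exactly \eqref{eq.13}. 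The interchange is justified because both infima are over fixed sets (${\cal M}_1({\cal Y}_{0,n})$ and ${\cal Q}^{\bf C2}({\cal Y}_{0,n};{\cal X}_{0,n})$) not depending on the other variable, and the objective is bounded below on the relevant domain by Lemma~\ref{compactness2} and Theorem~\ref{existence_rd}; one may also note the two minima are actually attained, which removes any subtlety. Writing $\min$ rather than $\inf$ is then justified by parts (b) and (c), which exhibit the minimizers.

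For \textbf{part (b)}, with $\overrightarrow{P}_{Y^n|X^n}$ fixed the inner objective in \eqref{eq.13} is, up to terms independent of $\bar{P}_{Y^n}$, the relative-entropy functional $\mathbb{D}\big(P_{X^n}\otimes\overrightarrow{P}_{Y^n|X^n}\,\|\,P_{X^n}\otimes\bar{P}_{Y^n}\big)$, which by Part A of Theorem~\ref{variational_equalities} (the sentence following \eqref{equation10}) is minimized at $\bar{P}_{Y^n}^*(dy^n)=\int(P_{X^n}\otimes\overrightarrow{P}_{Y^n|X^n})(dx^n,dy^n)$; this is just the statement that conditioning reduces divergence, i.e. $\bar\nu_{0,n}^*=\nu_{0,n}$ in the notation of that theorem. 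For \textbf{part (c)}, with $\bar{P}_{Y^n}$ fixed, the inner problem over $\overrightarrow{P}_{Y^n|X^n}$ is precisely the Lagrangian minimization \eqref{ex13} whose solution (via the Gateaux-differential computation already invoked in the paragraph preceding this subsection) is the product form \eqref{ex14}; one simply cites that computation, noting that the fixed $\bar{P}_{Y^n}$ plays the role of the product $\otimes_i P^*_{Y_i|Y^{i-1}}$ appearing there, and that stationarity makes all factors identical. The only care needed is to confirm the constraint ${\overrightarrow{P}_{Y^n|X^n}}\in{\cal Q}^{\bf C2}$ is respected — which it is, since \eqref{ex14} is manifestly a convolution of causal kernels $P^*_{Y_i|Y^{i-1},X^i}$ — and that the minimizer lies in $\overrightarrow{\cal Q}_{0,n}(D)$ when ${R}^{na}_{0,n}(D)>0$, which follows from the complementary-slackness / active-constraint argument already stated after item (2) above.
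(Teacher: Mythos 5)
Your proposal is correct and follows exactly the route the paper intends: the paper's own proof is the one-line remark ``a consequence of previous section,'' meaning precisely the combination of the Lagrangian form \eqref{ex13}, the variational equality \eqref{equation10} with its achiever $\bar{\nu}^*_{0,n}=\nu_{0,n}$ for part (b), and the Gateaux-differential solution \eqref{ex14} (with the fixed $\bar{P}_{Y^n}$ substituted for the self-consistent marginal) for part (c). Your additional care about interchanging the two infima and about membership in ${\cal Q}^{\bf C2}$ only makes explicit what the paper leaves implicit.
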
 
\begin{proof} 
A consequence of previous section.
\end{proof}

We now have the following algorithm. 
\begin{theorem}(Convergence of BAA)
Let $\bar{P}^0_{Y^n}$ be any probability measure which is positive. Let  $\bar{P}_{Y^n}^{r+1}$ be given in terms of $\bar{P}_{Y^n}^r$ by
\begin{align}
\bar{P}_{Y^n}^{r+1}=\bar{P}_{Y^n}^r\int\bigg(\otimes_{i=0}^n\frac{A_{i}}{\int_{{\cal Y}_i}A_{i}\bar{P}^{r}_{Y_i|Y^{i-1}}(dy_i|y^{i-1})}\bigg)P_{X^n}(dx^n)\nonumber
\end{align}
where $A_{i}=e^{s\rho(T^i{x^n},T^i{y^n})}$. Then
\begin{align}
D(\overrightarrow{P}_{Y^n|X^n}(\bar{P}^r_{Y^n}))&\longrightarrow{D}_s, ~\mbox{as}~r\rightarrow\infty\nonumber\\
\mathbb{I}_{X^n\rightarrow{Y^n}}(P_{X^n},\overrightarrow{P}_{Y^n|X^n}(\bar{P}_{Y^n}^r))&\longrightarrow{R}^{na}_{0,n}(D_s),~\mbox{as}~r\rightarrow\infty\nonumber
\end{align}
where $D_s=\int{d_{0,n}(x^n,y^n)}(P_{X^n}\otimes\overrightarrow{P}_{Y^n|X^n})(dx^n,dy^n)$ and $(D_s,{R}^{na}_{0,n}(D_s))$ is a point on the curve ${R}^{na}_{0,n}(D)$ parametrized by $s$.
\end{theorem}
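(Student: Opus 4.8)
The plan is to follow the classical Csiszár–Blahut template for alternating minimization, specialized to the double-minimization form established in Theorem~\ref{double_minimization}. Write
\[
F(\bar{P}_{Y^n},\overrightarrow{P}_{Y^n|X^n})\tri\int\log\Big(\frac{\overrightarrow{P}_{Y^n|X^n}(dy^n|x^n)}{\bar{P}_{Y^n}(dy^n)}\Big)(P_{X^n}\otimes\overrightarrow{P}_{Y^n|X^n})-s\int{d}_{0,n}(x^n,y^n)(P_{X^n}\otimes\overrightarrow{P}_{Y^n|X^n}),
\]
so that $R^{na}_{0,n}(D)=sD(n+1)+\min_{\bar{P}_{Y^n}}\min_{\overrightarrow{P}_{Y^n|X^n}}F$. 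The iteration in the statement is exactly the alternating scheme: given $\bar{P}^r_{Y^n}$, the inner minimizer over $\overrightarrow{P}_{Y^n|X^n}\in{\cal Q}^{\bf C2}$ is the Gibbs-type kernel of (\ref{ex14}) with $\bar{P}^r_{Y^n}$ in place of $P^*_{Y^n}$ (part (c)), and then the update $\bar{P}^{r+1}_{Y^n}$ is the $Y^n$-marginal $\int(P_{X^n}\otimes\overrightarrow{P}^{r}_{Y^n|X^n})$ of the resulting joint law (part (b)); a short computation shows the marginalization yields precisely the displayed product/integral formula with $A_i=e^{s\rho(T^ix^n,T^iy^n)}$.

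The key steps, in order, are: (i) establish monotonicity, $F(\bar{P}^{r+1}_{Y^n},\overrightarrow{P}^{r+1}_{Y^n|X^n})\le F(\bar{P}^{r}_{Y^n},\overrightarrow{P}^{r}_{Y^n|X^n})$, using that each half-step is a genuine minimization (parts (b),(c)); (ii) note $F$ is bounded below on the feasible set, since by Theorem~\ref{existence_rd} $R^{na}_{0,n}(D)$ is finite and $F\ge R^{na}_{0,n}(D_s)-sD_s(n+1)$ along the trajectory with $D_s$ the realized average distortion, so the monotone sequence converges; (iii) identify the limit with the minimum by a ``three-point'' / Pythagorean inequality for relative entropy: for any competing $\bar{P}_{Y^n}$ one shows $F(\bar{P}^{r}_{Y^n},\overrightarrow{P}^{r}_{Y^n|X^n})-F(\bar{P}_{Y^n},\overrightarrow{P}_{Y^n|X^n}(\bar{P}_{Y^n}))\le \mathbb{D}(\bar{P}^{r}_{Y^n}\|\bar{P}_{Y^n})-\mathbb{D}(\bar{P}^{r+1}_{Y^n}\|\bar{P}_{Y^n})$, then sum over $r$ so the telescoping right-hand side is finite (here positivity of $\bar{P}^0_{Y^n}$ guarantees the initial divergence is finite), forcing the gap to vanish; (iv) conclude $\mathbb{I}_{X^n\rightarrow{Y^n}}(P_{X^n},\overrightarrow{P}_{Y^n|X^n}(\bar{P}^r_{Y^n}))\rightarrow R^{na}_{0,n}(D_s)$ and $D(\overrightarrow{P}_{Y^n|X^n}(\bar{P}^r_{Y^n}))\rightarrow D_s$, using lower semicontinuity of directed information and weak compactness of ${\cal Q}^{\bf C2}$ from Lemma~\ref{compactness2} to pass to the limit, and the parametric relation between $s$ and $D_s$ from the previous subsection to place $(D_s,R^{na}_{0,n}(D_s))$ on the rate-distortion curve.

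The main obstacle I expect is step (iii): proving the three-point inequality in the nonanticipative setting, because the inner kernel is constrained to lie in ${\cal Q}^{\bf C2}({\cal Y}_{0,n};{\cal X}_{0,n})$ rather than ranging over all conditionals, so one must verify that the Gibbs minimizer (\ref{ex14}) genuinely has the product (causal) structure and that the variational identity (\ref{equation10}) can be applied coordinate-wise. Concretely, one has to check that the update preserves consistency condition {\bf C2} at every step, that the chain-rule decomposition of $\mathbb{D}(\overleftarrow{P}_{0,n}\otimes\overrightarrow{Q}_{0,n}\|\cdot)$ used in Theorem~\ref{variational_equalities} Part~A survives the restriction to causal measures, and that on a compact Polish ${\cal Y}_{0,n}$ all the integrals and exponential moments appearing in $A_i$ are finite so the divergences in the telescoping bound are well defined. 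A secondary technical point is justifying the interchange of limit and the directed-information/distortion functionals, which is handled by the lower semicontinuity and closedness assertions of Lemma~\ref{compactness2} together with Assumption~\ref{conditions-existence}.
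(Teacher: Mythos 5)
Your proposal is correct and follows essentially the same route as the paper, whose proof consists precisely of invoking the double-minimization structure of Theorem~\ref{double_minimization} together with the classical Blahut/Csisz\'ar alternating-minimization convergence argument (monotone descent, telescoping three-point inequality, identification of the limit) that you spell out. The technical caveats you flag about preserving the causal constraint {\bf C2} and the product form of the Gibbs minimizer are the right ones to check, and they are exactly what Theorem~\ref{double_minimization}(b),(c) supplies.
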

\begin{proof}
The derivation utilizes Theorem~\ref{double_minimization} and \cite{blahut1987}.
\end{proof}
%%%%%%%%%%%%%%%%%%%%%%%%%%%%%%%%%%%%%%%%%%%%%%%%%%%%%%%%%%%%%%%%%%%%%%%%%%%%%%%%%%%%%%%%%%%%%%

\section{Conclusion}
In this paper we derive two variational equalities for directed information. Then we show existence of the reproduction distribution which achieves the infimum of the nonanticipative RDF, and we use the variational equality to find a BAA for stationary nonanticipative RDF. Recently, we have applied the nonanticipative RDF in source-channel matching via symbol-by-symbol transmission. Specifically, we have computed $R^{na}(D)$ explicitly for sources with memory without anticipation \cite{kourtellaris-charalambous-stavrou2013a}, and we have used $R^{na}(D)$ in filtering applications \cite{charalambous-stavrou-ahmed2012a,stavrou-charalambous2013}.

\bibliographystyle{IEEEtran}
\bibliography{photis_references_variational_equalities}

\end{document}